\numberwithin{equation}{section}
\def\ca{{\mathcal A}}
\def\cf{{\mathcal F}}
\def\cam{{\mathcal M}}
\def\bc{{\mathbb C}}
\def\bn{{\mathbb N}}
\def\br{{\mathbb R}}
\def\a{\alpha}
\def\b{\beta}
\def\l{\lambda}       
\def\m{\mu}
\def\r{\rho}
\def\s{\sigma}
\def\t{\tau}
\def\o{\omega}         \def\O{\Omega}
\def\itm#1{\item{$(#1)$}}
\def\ov{\overline}
\DeclareMathOperator{\re}{Re}
\DeclareMathOperator{\Var}{Var} 
\DeclareMathOperator{\Cov}{Cov}
\DeclareMathOperator{\Corr}{Corr} 
\DeclareMathOperator{\Tr}{Tr}
\newcommand{\set}[1]{\left\{#1\right\}}
\newtheorem{Thm}{Theorem}[section]
\newtheorem{Prop}[Thm]{Proposition}
\newtheorem{Lemma}[Thm]{Lemma}
\theoremstyle{definition}
\newtheorem{Dfn}[Thm]{Definition}
\theoremstyle{remark}
\newtheorem{Rem}[Thm]{Remark}
\begin{document}

\title{Uncertainty principle for Wigner-Yanase-Dyson information in
semifinite von Neumann algebras}

\author{ Paolo Gibilisco\footnote{Dipartimento SEFEMEQ, Facolt\`a di
Economia, Universit\`a di Roma ``Tor Vergata", Via Columbia 2, 00133
Rome, Italy.  Email: gibilisco@volterra.uniroma2.it -- URL:
http://www.economia.uniroma2.it/sefemeq/professori/gibilisco}
 \ and Tommaso Isola\footnote{Dipartimento
di Matematica, Universit\`a di Roma ``Tor Vergata", Via della Ricerca
Scientifica, 00133 Rome, Italy.  Email: isola@mat.uniroma2.it -- URL:
http://www.mat.uniroma2.it/$\sim$isola} }

\maketitle

\begin{abstract}
    In \cite{Ko} Kosaki proved an uncertainty principle for matrices,
    related to Wigner-Yanase-Dyson information, and asked if a similar
    inequality could be proved in the von Neumann algebra setting.  In
    this paper we prove such an uncertainty principle in the
    semifinite case.

\smallskip

\noindent 2000 {\sl Mathematics Subject Classification.} Primary
62B10, 94A17; Secondary 46L30, 46L60.

\noindent {\sl Key words and phrases.} Uncertainty principle,
Wigner-Yanase-Dyson information.
\end{abstract}

\section{Introduction}

Let $M_n:=M_n(\mathbb{C})$ (resp.$M_{n,sa}:=M_n(\mathbb{C})_{sa}$) be
the set of all $n \times n$ complex matrices (resp.  all $n \times n$
self-adjoint matrices).  Let ${\cal D}_n^1$ be the set of strictly
positive density matrices namely
$$
{\cal D}_n^1=\{\rho \in M_n : {\rm Tr} \rho=1, \, \rho>0 \}.
$$

\begin{Dfn}
    For $A,B \in M_{n,sa}$ and $\rho \in {\cal D}_n^1$ define
    covariance and variance as
    \begin{align*}
	\Cov_{\rho}(A,B) & :={\rm Tr}(\rho A B)-{\rm Tr}(\rho
	A)\cdot{\rm Tr}(\rho B) \\
	\Var_{\rho}(A) & :={\rm Tr}(\rho A^2)-{\rm Tr}(\rho A)^2.
    \end{align*}
\end{Dfn}

Then the well known Schr{\"o}dinger and Heisenberg uncertainty
principles are given in the following 

\begin{Thm} {\rm \cite{Heis,Schr}}

For $A,B \in M_{n,sa}$ and $\rho \in {\cal
D}_n^1$ one has
$$
{\rm Var}_{\rho}(A){\rm Var}_{\rho}(B)-|\re \Cov_{\rho}(A,B)|^2
\geq
\frac{1}{4}\vert {\rm Tr}(\rho[A,B])\vert^2,
$$
that implies
$$
{\rm Var}_{\rho}(A){\rm Var}_{\rho}(B)
\geq
\frac{1}{4}\vert {\rm Tr}(\rho[A,B])\vert^2.
$$
\end{Thm}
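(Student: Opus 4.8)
The plan is to realize the (centered) covariance as the conjugate of an inner product determined by the state $\rho$, apply the Cauchy--Schwarz inequality, and then split the resulting quantity into its real and imaginary parts, the imaginary part carrying exactly the commutator term.

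First I would reduce to the centered case. Replacing $A$ by $A_0:=A-\Tr(\rho A)\cdot I$ and $B$ by $B_0:=B-\Tr(\rho B)\cdot I$ (here $I$ is the identity matrix) changes neither $[A,B]$, nor $\Var_\rho(A)$, $\Var_\rho(B)$, nor $\re\Cov_\rho(A,B)$, nor $\Tr(\rho[A,B])$. So without loss of generality $\Tr(\rho A)=\Tr(\rho B)=0$, and then $\Cov_\rho(A,B)=\Tr(\rho AB)$ and $\Var_\rho(A)=\Tr(\rho A^2)$, $\Var_\rho(B)=\Tr(\rho B^2)$.

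Next, introduce the sesquilinear form $\langle X,Y\rangle_\rho:=\Tr(\rho\, Y^*X)$ on $M_n$; since $\rho\ge 0$ this form is positive semidefinite (in fact positive definite, as $\rho>0$). For self-adjoint $A,B$ one computes $\langle A,A\rangle_\rho=\Var_\rho(A)$, $\langle B,B\rangle_\rho=\Var_\rho(B)$, and $\langle A,B\rangle_\rho=\Tr(\rho BA)=\overline{\Tr(\rho AB)}=\overline{\Cov_\rho(A,B)}$. The Cauchy--Schwarz inequality for $\langle\cdot,\cdot\rangle_\rho$ then yields $|\Cov_\rho(A,B)|^2=|\langle A,B\rangle_\rho|^2\le\Var_\rho(A)\,\Var_\rho(B)$. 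To extract the commutator, write $AB=\tfrac12(AB+BA)+\tfrac12(AB-BA)$: since $AB+BA$ is self-adjoint, $\Tr\bigl(\rho\cdot\tfrac12(AB+BA)\bigr)$ is real and equals $\re\Cov_\rho(A,B)$; since $[A,B]=AB-BA$ is skew-adjoint (so $i[A,B]$ is self-adjoint), $\Tr(\rho[A,B])$ is purely imaginary and $\tfrac12\Tr(\rho[A,B])=i\,\im\Cov_\rho(A,B)$. Hence $|\Cov_\rho(A,B)|^2=|\re\Cov_\rho(A,B)|^2+\tfrac14|\Tr(\rho[A,B])|^2$, and substituting into the Cauchy--Schwarz bound gives the Schr\"odinger inequality; discarding the nonnegative term $|\re\Cov_\rho(A,B)|^2$ gives the Heisenberg inequality.

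The only points requiring care are the positivity of $\langle\cdot,\cdot\rangle_\rho$ (which legitimizes Cauchy--Schwarz and uses $\rho\ge 0$) and the bookkeeping of complex conjugates in identifying $\langle A,B\rangle_\rho$ with $\overline{\Cov_\rho(A,B)}$; the real/imaginary splitting via the anticommutator and commutator is then routine. No degeneracy issue arises since $\rho>0$, though even a merely semidefinite form would suffice because Cauchy--Schwarz holds in that generality.
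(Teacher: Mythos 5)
Your proof is correct. The paper states this classical Schr\"odinger--Heisenberg inequality without proof, merely citing \cite{Heis,Schr}, so there is no in-paper argument to compare against; your route --- centering $A$ and $B$, applying Cauchy--Schwarz for the form $\langle X,Y\rangle_\rho=\Tr(\rho Y^*X)$, and splitting $\Tr(\rho AB)$ into the real anticommutator part and the purely imaginary commutator part --- is the standard derivation, and every step checks out.
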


Recently a different uncertainty principle has been found 
\cite{LZ,LQa,LQb,Ko,YFK}. 

\begin{Dfn}
    For $A,B \in M_{n,sa}$, $\b\in(0,1)$, and $\rho \in {\cal D}_n^1$
    define $\b$-correlation and $\b$-information as
    \begin{align*}
	\Corr_{\rho,\b}(A,B) & :={\rm Tr}(\rho A B)-{\rm Tr}(\rho^{\b}
	A\r^{1-\b}B) \\
	I_{\rho,\b}(A) & := \Corr_{\r,\b}(A,A) \equiv Tr(\r A^{2}) -
	\Tr(\r^{\b}A\r^{1-\b}A).
	    \end{align*}
    The latter coincides with the Wigner-Yanase-Dyson information.
\end{Dfn}

\begin{Thm} \label{Thm:KoIneq}
    $$
    \Var_{\r}(A)\Var_{\r}(B) - |\re\Cov_{\r}(A,B)|^{2} \geq 
    I_{\r,\b}(A)I_{\r,\b}(B) - |\re\Corr_{\r,\b}(A,B)|^{2}.
    $$
\end{Thm}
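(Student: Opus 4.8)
The plan is to diagonalize $\rho$ and thereby reduce the inequality, through an elementary weighted arithmetic--geometric mean estimate, to a comparison of two Gram-type determinants. First I would pass to the centered observables $A_{0}:=A-\Tr(\rho A)\,I$ and $B_{0}:=B-\Tr(\rho B)\,I$. Using $\Tr(\rho^{\beta}\rho^{1-\beta}B)=\Tr(\rho B)$ one checks immediately that
$$\Cov_{\rho}(A,B)=\Tr(\rho A_{0}B_{0}),\qquad \Var_{\rho}(A)=\Tr(\rho A_{0}^{2}),$$
$$\Corr_{\rho,\beta}(A,B)=\Tr(\rho A_{0}B_{0})-\Tr(\rho^{\beta}A_{0}\rho^{1-\beta}B_{0}),\qquad I_{\rho,\beta}(A)=\Tr(\rho A_{0}^{2})-\Tr(\rho^{\beta}A_{0}\rho^{1-\beta}A_{0}),$$
so nothing is lost by assuming from now on that $\Tr(\rho A)=\Tr(\rho B)=0$.

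Next I would diagonalize: write $\rho=\sum_{i}\lambda_{i}\,|i\rangle\langle i|$ in an orthonormal eigenbasis, with every $\lambda_{i}>0$, and set $a_{ij}:=\langle i|A|j\rangle$, $b_{ij}:=\langle i|B|j\rangle$, so $a_{ji}=\overline{a_{ij}}$ and $b_{ji}=\overline{b_{ij}}$. Expanding the traces gives
$$\Var_{\rho}(A)=\sum_{i,j}\lambda_{i}\,|a_{ij}|^{2},\qquad \Cov_{\rho}(A,B)=\sum_{i,j}\lambda_{i}\,a_{ij}b_{ji},$$
$$I_{\rho,\beta}(A)=\sum_{i,j}(\lambda_{i}-\lambda_{i}^{\beta}\lambda_{j}^{1-\beta})\,|a_{ij}|^{2},\qquad \Corr_{\rho,\beta}(A,B)=\sum_{i,j}(\lambda_{i}-\lambda_{i}^{\beta}\lambda_{j}^{1-\beta})\,a_{ij}b_{ji}.$$
I would then symmetrize each of these sums under the exchange $i\leftrightarrow j$. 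The antisymmetric part of any sum $\sum_{i,j}c_{ij}\,a_{ij}b_{ji}$ with real coefficients $c_{ij}$ equals minus its own complex conjugate, hence is purely imaginary and vanishes upon taking $\re(\cdot)$; moreover the diagonal contribution to $I_{\rho,\beta}$ drops out since $\lambda_{i}^{\beta}\lambda_{i}^{1-\beta}=\lambda_{i}$. Letting the index $k$ range over the diagonal positions together with the real and imaginary parts of the off-diagonal entries $(i,j)$ with $i<j$, I obtain a single family of real coordinates $(p_{k})$ for $A$ and $(q_{k})$ for $B$, and two weight families
$$w_{ii}:=\lambda_{i},\quad w_{ij}:=\lambda_{i}+\lambda_{j}\ \ (i<j);\qquad v_{ii}:=0,\quad v_{ij}:=\lambda_{i}+\lambda_{j}-\lambda_{i}^{\beta}\lambda_{j}^{1-\beta}-\lambda_{j}^{\beta}\lambda_{i}^{1-\beta}\ \ (i<j),$$
for which $\Var_{\rho}(A)=\sum_{k}w_{k}p_{k}^{2}$, $\re\Cov_{\rho}(A,B)=\sum_{k}w_{k}p_{k}q_{k}$, $I_{\rho,\beta}(A)=\sum_{k}v_{k}p_{k}^{2}$, $\re\Corr_{\rho,\beta}(A,B)=\sum_{k}v_{k}p_{k}q_{k}$, and likewise for $B$. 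The point on which everything hinges is that $0\le v_{k}\le w_{k}$ for every $k$: this is trivial on the diagonal, and for $i<j$ it follows from the weighted arithmetic--geometric mean inequality $\lambda_{i}^{\beta}\lambda_{j}^{1-\beta}\le\beta\lambda_{i}+(1-\beta)\lambda_{j}$, applied also with $i,j$ interchanged, which gives $0\le\lambda_{i}^{\beta}\lambda_{j}^{1-\beta}+\lambda_{j}^{\beta}\lambda_{i}^{1-\beta}\le\lambda_{i}+\lambda_{j}$.

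To finish, I would invoke Lagrange's identity: for real coordinates and nonnegative weights,
$$\Var_{\rho}(A)\Var_{\rho}(B)-|\re\Cov_{\rho}(A,B)|^{2}=\sum_{k<l}w_{k}w_{l}\,(p_{k}q_{l}-p_{l}q_{k})^{2},$$
and the same identity with $v$ in place of $w$ computes $I_{\rho,\beta}(A)I_{\rho,\beta}(B)-|\re\Corr_{\rho,\beta}(A,B)|^{2}$. Since $v_{k}v_{l}\le w_{k}w_{l}$ for all $k<l$, the inequality of the theorem follows by comparing the two sums term by term. The argument is in essence Kosaki's, and I expect the only step demanding real care to be the symmetrization in the middle paragraph --- checking that the antisymmetric pieces are genuinely imaginary, and that within each off-diagonal block the real and imaginary coordinates carry a common weight in all four quadratic forms simultaneously --- after which the AM--GM bound and Lagrange's identity close the proof at once.
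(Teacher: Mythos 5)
Your proof is correct, and it is essentially Kosaki's original matrix argument. Note that the paper itself does not prove Theorem \ref{Thm:KoIneq} but quotes it from \cite{Ko}; its own contribution is the semifinite von Neumann algebra version (Theorem \ref{Thm:main}), whose proof is the measure--theoretic translation of exactly your argument, so the two routes are worth aligning. Your diagonalization of $\r$ and reduction to real coordinates $(p_{k}),(q_{k})$ with weights $w_{k},v_{k}$ becomes, in the paper, the family of Borel measures $\m_{ab}$ on $\br^{2}$ built from the spectral measure of $\r$ (Lemmas \ref{Lem:Extension}--\ref{Lem:DoubleInt}); this substitute is forced in the general setting, where $\r\in L^{1}(\cam,\t)_{+}$ need not have an eigenbasis. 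Your appeal to Lagrange's identity, which exhibits $\Var_{\r}(A)\Var_{\r}(B)-|\re\Cov_{\r}(A,B)|^{2}$ as $\sum_{k<l}w_{k}w_{l}(p_{k}q_{l}-p_{l}q_{k})^{2}$, corresponds to the positivity of the measure $\m_{aa}\otimes\m_{bb}+\m_{bb}\otimes\m_{aa}-2\re\m_{ab}\otimes\re\m_{ab}$ (Lemma \ref{Lem:PositiveMeasure}), which the paper proves from Cauchy--Schwarz together with $\a_{1}^{2}\b_{2}^{2}+\a_{2}^{2}\b_{1}^{2}\geq 2\a_{1}\b_{1}\a_{2}\b_{2}$ --- the same two ingredients that underlie Lagrange's identity. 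Your weight comparison $0\leq v_{k}\leq w_{k}$ via the weighted arithmetic--geometric mean inequality is precisely the paper's pointwise bound $\l_{1}+\l_{2}-\l_{1}^{\b}\l_{2}^{1-\b}\geq 0$ on the integrand in the proof of Theorem \ref{Thm:main}. The step you flag as delicate, the symmetrization, is sound: for real coefficients $c_{ij}$ the complex conjugate of $\sum_{i,j}c_{ij}a_{ij}b_{ji}$ equals $\sum_{i,j}c_{ji}a_{ij}b_{ji}$, so the antisymmetric part is purely imaginary, and the real part carries the symmetrized weight $c_{ij}+c_{ji}$ on both the real and the imaginary part of $a_{ij}$, uniformly in all four quadratic forms. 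In short, your route is more elementary and self-contained in the matrix case; the paper's route buys the extension to operators with continuous spectrum in infinite dimensions, where no eigenbasis expansion is available.
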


Kosaki \cite{Ko} asked if the previous inequality, which makes 
perfect sense in a von Neumann algebra setting, could indeed be
proved.  In the sequel, we provide such a proof in the
semifinite case.

In closing, we mention that different generalizations of Theorem 
\ref{Thm:KoIneq} have been recently obtained by the authors 
\cite{GiIs07a,GiIs07b,GII01,GII02,GII03,GII04}.

\section{Auxiliary lemmas}
\label{sec:auxiliary}

In all this Section we let $(\cam,\t)$ be a semifinite von Neumann
algebra with a n.s.f. trace, and denote by $Proj(\cam)$ the set of
orthogonal projections in $\cam$, and by $\ov{\cam}$ the topological
$^{*}$-algebra of $\t$-measurable operators.  We fix $\r,\s\in
\ov{\cam}_{sa}$, with spectral decompositions
$\r=\int_{-\infty}^{+\infty} \l\, d e_{\r}(\l)$, and
$\s=\int_{-\infty}^{+\infty} \l\, d e_{\s}(\l)$. 

Finally, we denote by $\ca$ the algebra generated by the sets
$\O_{1}\times \O_{2}$, for $\O_{1},\O_{2}$ Borel subsets of $\br$, and
observe that $\s(\ca)$, the $\s$-algebra generated by $\ca$, coincides
with the Borel subsets of $\br^{2}$.

\begin{Lemma} \label{Lem:Extension}
    Let $a,b\in \cam \cap L^{2}(\cam,\t)$.  Let
    $\m_{ab}(\O_{1}\times\O_{2}) :=
    \t(e_{\r}(\O_{1})a^{*}e_{\s}(\O_{2})b)$, for $\O_{1},\O_{2}$ Borel
    subsets of $\br$.  Then $\m_{ab}$ extends uniquely to a bounded
    Borel measure on $\br^{2}$.
\end{Lemma}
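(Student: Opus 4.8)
The plan is to identify $\m_{ab}$ with the complex measure attached to a pair of vectors and a single projection-valued measure on the Hilbert space $\ch:=L^{2}(\cam,\t)$, on which $\cam$ acts by left multiplication. Since $\r$ and $\s$ are $\t$-measurable they are affiliated with $\cam$, so every spectral projection $e_{\r}(\O)$ and $e_{\s}(\O)$ lies in $Proj(\cam)$; and $a,b\in\cam\cap L^{2}(\cam,\t)$ are in particular vectors of $\ch$. For any $m\in\cam$ the left multiplication $L_{m}\x:=m\x$ and the right multiplication $R_{m}\x:=\x m$ are bounded operators on $\ch$ of norm $\le\|m\|_{\infty}$, because $\|m\x\|_{2}^{2}=\t(m^{*}m\,\x\x^{*})\le\|m\|_{\infty}^{2}\|\x\|_{2}^{2}$ and $\|\x m\|_{2}^{2}=\t(\x^{*}\x\,mm^{*})\le\|m\|_{\infty}^{2}\|\x\|_{2}^{2}$; moreover $L_{p}$ and $R_{q}$ commute for all $p,q\in\cam$, as left and right multiplications always do.

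Next I would check that $P_{1}(\O):=R_{e_{\r}(\O)}$ and $P_{2}(\O):=L_{e_{\s}(\O)}$ are commuting projection-valued measures on $\ch$. That $R_{e_{\r}(\O)}$ is an orthogonal projection is immediate ($e_{\r}(\O)$ is idempotent and self-adjoint, and $\t(\eta^{*}\x p)=\t((\eta p)^{*}\x)$ for $p=p^{*}$), and $P_{1}(\br)=1$. For a disjoint sequence $\O=\bigsqcup_{n}\O_{n}$ one has, for $\x\in\ch$, $\|\x\,e_{\r}(\bigsqcup_{n\le N}\O_{n})-\x\,e_{\r}(\O)\|_{2}^{2}=\t\big((\x^{*}\x)\,q_{N}\big)\to 0$, where $q_{N}:=e_{\r}(\O)-e_{\r}(\bigsqcup_{n\le N}\O_{n})\downarrow 0$, by normality of $\t$; so $P_{1}$ is $\s$-additive in the strong operator topology, and the same argument applies to $P_{2}$. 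Being commuting spectral measures, $P_{1}$ and $P_{2}$ admit a (unique) product spectral measure: there is a unique projection-valued measure $Q$ on the Borel subsets of $\br^{2}$ with $Q(\O_{1}\times\O_{2})=P_{1}(\O_{1})P_{2}(\O_{2})=R_{e_{\r}(\O_{1})}L_{e_{\s}(\O_{2})}$ for all Borel $\O_{1},\O_{2}$. This is the two-variable spectral theorem; e.g. pass to the abelian von Neumann algebra jointly generated by $P_{1}$ and $P_{2}$ and apply the ordinary spectral theorem there. This passage from the one-dimensional spectral measures to the two-dimensional $Q$ is the one step that is not pure bookkeeping.

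Having $Q$, I would set $\tilde{\m}_{ab}(E):=\langle Q(E)b,a\rangle_{\ch}$, with $\langle x,y\rangle_{\ch}:=\t(y^{*}x)$, for every Borel $E\subseteq\br^{2}$. As the pairing of a projection-valued measure against the two fixed vectors $b,a\in\ch$, $\tilde{\m}_{ab}$ is automatically a complex Borel measure on $\br^{2}$, and its total variation is at most $\|a\|_{2}\|b\|_{2}$, by the Cauchy--Schwarz estimate $\sum_{i}|\langle Q(E_{i})b,a\rangle|\le\big(\sum_{i}\|Q(E_{i})b\|_{2}^{2}\big)^{1/2}\big(\sum_{i}\|Q(E_{i})a\|_{2}^{2}\big)^{1/2}=\|b\|_{2}\|a\|_{2}$ over finite Borel partitions; hence $\tilde{\m}_{ab}$ is bounded. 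It then remains to compute, for rectangles,
$$
\tilde{\m}_{ab}(\O_{1}\times\O_{2})=\langle R_{e_{\r}(\O_{1})}L_{e_{\s}(\O_{2})}b,a\rangle_{\ch}=\t\big(a^{*}\,e_{\s}(\O_{2})\,b\,e_{\r}(\O_{1})\big)=\t\big(e_{\r}(\O_{1})\,a^{*}\,e_{\s}(\O_{2})\,b\big)=\m_{ab}(\O_{1}\times\O_{2}),
$$
the third equality by the trace property, so that $\tilde{\m}_{ab}$ is indeed an extension of $\m_{ab}$.

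For uniqueness I would use that the rectangles $\O_{1}\times\O_{2}$ form a $\pi$-system which contains $\br^{2}$ and which, by the observation recorded just before the statement, generates the Borel $\s$-algebra of $\br^{2}$; hence any two finite --- in particular any two bounded --- Borel measures that agree on all rectangles agree everywhere, by the Dynkin $\pi$-$\lambda$ theorem. An alternative to the whole argument is to bypass $Q$ and extend $\m_{ab}$ directly by the Carath\'eodory--Hahn theorem, after checking that $\m_{ab}$ is well defined, finitely additive, and of finite variation over grid partitions ($\sum_{i,j}|\m_{ab}(\O_{1}^{i}\times\O_{2}^{j})|\le\|a\|_{2}\|b\|_{2}$, obtained by absorbing unimodular phases into partial isometries $\sum_{j}\lambda_{j}e_{\s}(\O_{2}^{j})$ and applying Cauchy--Schwarz), and then $\s$-additive on the algebra $\ca$ (again via normality of $\t$ as above); but the route through the product spectral measure is the shortest, and in it the only real work is invoking that product measure.
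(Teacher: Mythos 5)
Your proof is correct and follows essentially the same route as the paper's: both construct the product of the commuting left/right-multiplication spectral measures on $L^{2}(\cam,\t)$ and obtain the extension by pairing the resulting projection-valued measure on $\br^{2}$ against the vectors $a$ and $b$ (your $\langle Q(E)b,a\rangle$ agrees with the paper's $\t(P\otimes Q(E)(a^{*})\cdot b)$ by traciality). The only cosmetic differences are that you verify the spectral-measure axioms and the total-variation bound explicitly and invoke the $\pi$-$\lambda$ theorem for uniqueness, where the paper cites Birman--Solomjak and Rao.
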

\begin{proof}
    For $\O\subset\br$ Borel subset, $x\in L^{2}(\cam,\t)$, let
    $P(\O)x:=e_{\r}(\O)x$, $Q(\O)x:=xe_{\s}(\O)$.  Then, $P,Q$ are
    commuting Borel spectral measures on $L^{2}(\cam,\t)$, and their
    product $P\otimes Q (\O_{1}\times \O_{2}):= P(\O_{1})Q(\O_{2})$
    extends uniquely to a Borel spectral measure on $\br^{2}$
    (\cite{BS}, Chapter 5).  Observe that $\m_{ab}(\O_{1}\times\O_{2})
    = \t(P\otimes Q(\O_{1}\times \O_{2})(a^{*})\cdot b)$, and, if
    $\set{A_{n}}$ is a sequence of disjoint Borel sets, then $P\otimes
    Q(\cup A_{n})(a^{*}) = \sum_{n} P\otimes Q(A_{n})(a^{*})$
    converges in $L^{2}(\cam,\t)$, so that $\t(P\otimes Q(\cup
    A_{n})(a^{*})\cdot b)$ is well defined.  So $\m_{ab} = \t(P\otimes
    Q(\cdot)(a^{*})\cdot b)$ is the desired extension.  
    
    Observe now that $\m_{ab}$ is a bounded Borel (complex) measure 
    on $\ca$. Indeed, with $A\in\ca$,
    $$
    |\m_{ab}(A)|^{2} = |\t(P\otimes Q(A)(a^{*})\cdot b)|^{2} \leq 
    \|P\otimes Q (A)(a^{*})\|_{L^{2}} \|b\|_{L^{2}} \leq 
    \|a\|_{L^{2}} \|b\|_{L^{2}}.
    $$
    Therefore, by \cite{Rao} Corollary 4.4.6, there is a unique
    extension of $\m_{ab}$ to a bounded (complex) measure on $\s(\ca)$,
    the $\s$-algebra generated by $\ca$, $i.e.$ the Borel subsets of
    $\br^{2}$.
\end{proof}

\begin{Lemma} \label{Lem:Properties}
    Let $a,b\in \cam\cap L^{2}(\cam,\t)$.  Then 
    
    \itm{i} $\m_{ab} = \frac14 \sum_{k=1}^{4} (-i)^{k} 
    \m_{a+i^{k}b,a+i^{k}b}$,
    
    \itm{ii} if $\s=\r$, $\m_{aa}$ is a real positive  measure,
    
    \itm{iii} if $a,b$ are self-adjoint, $\re \m_{ab} = \re \m_{ba}$.
\end{Lemma}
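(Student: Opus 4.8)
The plan is to establish each identity first on the algebra $\ca$ generated by the measurable rectangles $\O_{1}\times\O_{2}$, and then to extend it to all Borel subsets of $\br^{2}$. In each of (i)--(iii) both sides of the asserted equality are bounded complex Borel measures on $\br^{2}$ — Lemma \ref{Lem:Extension} gives that $\m_{ab}$ and each $\m_{a+i^{k}b,a+i^{k}b}$ are such measures (note $a+i^{k}b\in\cam\cap L^{2}(\cam,\t)$), and complex conjugation preserves this property — so it is enough to check the equality on a single rectangle; by finite additivity it then holds on all of $\ca$, and by the uniqueness part of Lemma \ref{Lem:Extension} it holds on $\s(\ca)$, $i.e.$ on all Borel sets. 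I expect this passage from rectangles to general Borel sets (together with, in (ii), the positivity claim) to be the only point needing care; the rest is bookkeeping with the spectral theorem and the trace property.

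For (i), on a rectangle $\O_{1}\times\O_{2}$ I would expand, using $\overline{i^{k}}=(-i)^{k}$ and $(-i)^{k}i^{k}=1$,
$$
\m_{a+i^{k}b,\,a+i^{k}b}(\O_{1}\times\O_{2})
=\t\big(e_{\r}(\O_{1})(a+i^{k}b)^{*}e_{\s}(\O_{2})(a+i^{k}b)\big)
=\big(\m_{aa}+i^{k}\m_{ab}+(-i)^{k}\m_{ba}+\m_{bb}\big)(\O_{1}\times\O_{2}),
$$
and then multiply each term by $(-i)^{k}$ and sum over $k=1,\dots,4$, using $\sum_{k=1}^{4}(-i)^{k}=0$, $\sum_{k=1}^{4}(-i)^{k}i^{k}=4$ and $\sum_{k=1}^{4}(-1)^{k}=0$, to obtain $4\,\m_{ab}(\O_{1}\times\O_{2})$; the general case follows as explained above.

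For (ii), with $\s=\r$, the point is that on a rectangle
$$
\m_{aa}(\O_{1}\times\O_{2})=\t\big(e_{\r}(\O_{1})a^{*}e_{\r}(\O_{2})a\big)
=\t\big((e_{\r}(\O_{2})ae_{\r}(\O_{1}))^{*}(e_{\r}(\O_{2})ae_{\r}(\O_{1}))\big)
=\|e_{\r}(\O_{2})ae_{\r}(\O_{1})\|_{L^{2}}^{2}\ge 0,
$$
using $e_{\r}(\O)^{2}=e_{\r}(\O)$ and cyclicity of $\t$; equivalently $\m_{aa}(\cdot)=\langle F(\cdot)a,a\rangle$, where $F$ is the product spectral measure on $L^{2}(\cam,\t)$ determined by $F(\O_{1}\times\O_{2})x=e_{\r}(\O_{2})\,x\,e_{\r}(\O_{1})$. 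Thus $\m_{aa}$ is a positive (hence real) finitely additive set function on $\ca$, whose unique extension is a positive Borel measure — in the second description this is immediate for all Borel sets, since each $F(B)$ is a projection.

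For (iii), with $a=a^{*}$, $b=b^{*}$, a direct computation on a rectangle gives
$$
\overline{\m_{ab}(\O_{1}\times\O_{2})}
=\t\big((e_{\r}(\O_{1})ae_{\s}(\O_{2})b)^{*}\big)
=\t\big(e_{\r}(\O_{1})be_{\s}(\O_{2})a\big)
=\m_{ba}(\O_{1}\times\O_{2}),
$$
using self-adjointness of $a,b$ and cyclicity of $\t$. Hence $\overline{\m_{ab}}=\m_{ba}$ on $\ca$, and therefore on all Borel sets by the argument of the first paragraph; taking real parts yields $\re\m_{ab}=\re\overline{\m_{ab}}=\re\m_{ba}$.
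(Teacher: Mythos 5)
Your proposal is correct and follows essentially the same route as the paper: verify each identity on rectangles $\O_{1}\times\O_{2}$ using the trace property and self-adjointness, then pass to all Borel sets via uniqueness of the extension from $\ca$ to $\s(\ca)$. The only differences are cosmetic — you write out the polarization computation that the paper dismisses as standard, and in (ii) you add the (equally valid) observation that $\m_{aa}(\cdot)=\langle F(\cdot)a,a\rangle$ for a product spectral measure, which gives positivity on all Borel sets directly.
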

\begin{proof}
    $(i)$ is standard.
    
    $(ii)$ Let $\O_{1},\O_{2}$ be Borel sets in $\br$, and set 
    $e_{j}:=e_{\r}(\O_{j})$, $j=1,2$. Then 
    $\m_{aa}(\O_{1}\times\O_{2}) = \t(e_{1}a^{*}e_{2}a) = 
    \t((e_{2}ae_{1})^{*}e_{2}ae_{1})\geq 0$, and the thesis follows 
    by uniqueness of the extension from $\ca$ to $\s(\ca)$.
    
    $(iii)$ Let $\O_{1},\O_{2}$ be Borel sets in $\br$, and set
    $e_{1}:=e_{\r}(\O_{1})$, $e_{2}:=e_{\s}(\O_{2})$.  Then $\re
    \m_{ab}(\O_{1}\times \O_{2}) = \re \t(e_{1}ae_{2}b) = \re
    \t(be_{2}ae_{1}) = \re \t(e_{1}be_{2}a) = \re \m_{ba}(\O_{1}\times
    \O_{2})$.
\end{proof}

\begin{Lemma} \label{Lem:FunctionalCalculus}
    Let $a,b\in\cam \cap L^{2}(\cam,\t)$.   Let 
    $g,h:\br\to\bc$ be bounded Borel functions. Then
    $$
    \t(g(\r)a^{*}h(\s)b) = \iint g(x)h(y)\, d\m_{ab}(x,y).
    $$
\end{Lemma}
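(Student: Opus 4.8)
The plan is the standard measure-theoretic bootstrap: check the identity on indicators, pass to simple functions by bilinearity, and then take limits under uniform approximation, using the finiteness of $\m_{ab}$ on the right and continuity of the functional calculus on the left.

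First, when $g=\chi_{\O_{1}}$ and $h=\chi_{\O_{2}}$ with $\O_{1},\O_{2}\subset\br$ Borel, the left-hand side is $\t(e_{\r}(\O_{1})a^{*}e_{\s}(\O_{2})b)=\m_{ab}(\O_{1}\times\O_{2})$ by the very definition of $\m_{ab}$, and the right-hand side is $\iint\chi_{\O_{1}}(x)\chi_{\O_{2}}(y)\,d\m_{ab}=\m_{ab}(\O_{1}\times\O_{2})$. Both sides are bilinear in the pair $(g,h)$ — the left because $g\mapsto g(\r)$, $h\mapsto h(\s)$ are linear and the trace is linear, the right by linearity of the integral — so for simple Borel functions $g=\sum_{j}c_{j}\chi_{\O_{j}}$, $h=\sum_{k}d_{k}\chi_{\La_{k}}$ one gets $\t(g(\r)a^{*}h(\s)b)=\sum_{j,k}c_{j}d_{k}\,\m_{ab}(\O_{j}\times\La_{k})=\iint g(x)h(y)\,d\m_{ab}$. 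Here every term is finite: since $g,h$ are bounded, $g(\r),h(\s)\in\cam$, so $g(\r)a^{*}h(\s)b\in L^{1}(\cam,\t)$ because $a^{*},b\in L^{2}(\cam,\t)$; and $\m_{ab}$ is a bounded Borel measure by Lemma \ref{Lem:Extension}.

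For general bounded Borel $g,h$, pick simple Borel functions $g_{n}\to g$ and $h_{n}\to h$ uniformly on $\br$; then $\sup_{n}\|g_{n}\|_{\infty}<\infty$ and $\sup_{n}\|h_{n}\|_{\infty}<\infty$ automatically. On the right, $g_{n}(x)h_{n}(y)\to g(x)h(y)$ uniformly on $\br^{2}$, so since $\m_{ab}$ has finite total variation, $\iint g_{n}h_{n}\,d\m_{ab}\to\iint gh\,d\m_{ab}$. On the left, the bounded Borel functional calculus for the $\t$-measurable self-adjoint operators $\r$ and $\s$ turns uniform convergence of the symbols into operator-norm convergence, $\|g_{n}(\r)-g(\r)\|\to0$ and $\|h_{n}(\s)-h(\s)\|\to0$, whence $g_{n}(\r)a^{*}\to g(\r)a^{*}$ and $h_{n}(\s)b\to h(\s)b$ in $L^{2}(\cam,\t)$ (e.g. $\|(g_{n}(\r)-g(\r))a^{*}\|_{L^{2}}\le\|g_{n}(\r)-g(\r)\|\,\|a^{*}\|_{L^{2}}$). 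The trace pairing $(x,y)\mapsto\t(xy)$ is jointly continuous on $L^{2}(\cam,\t)\times L^{2}(\cam,\t)$ (it is, up to an adjoint, the $L^{2}$ inner product), so $\t(g_{n}(\r)a^{*}h_{n}(\s)b)\to\t(g(\r)a^{*}h(\s)b)$. Passing to the limit in the identity for simple functions gives the claim.

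The only step needing care is this last limit on the left-hand side: one must explicitly use that bounded Borel functional calculus for $\t$-measurable self-adjoint operators sends uniformly convergent bounded symbols to norm-convergent operators, and that multiplication $\cam\times L^{2}(\cam,\t)\to L^{2}(\cam,\t)$ (equivalently $L^{2}\times L^{2}\to L^{1}$) is jointly continuous. Both facts are standard, but they are where all the analytic content sits; the rest is bookkeeping.
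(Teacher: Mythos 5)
Your proof is correct, and the overall skeleton (indicators, then simple functions by bilinearity, then a limit) matches the paper's. The difference lies in the limiting step, and it is a genuine one. The paper approximates $g,h$ by simple functions converging \emph{pointwise} with domination ($|s_m|\le|g|$, $|t_n|\le|h|$) and then invokes the Birman--Solomjak theorem on spectral integrals to get $P\otimes Q(r_n)(a^*)\to P\otimes Q(k)(a^*)$ in $L^2(\cam,\t)$, where $P\otimes Q$ is the product spectral measure built in Lemma \ref{Lem:Extension}; convergence of the right-hand side then uses dominated convergence for the bounded measure $\m_{ab}$. You instead approximate \emph{uniformly}, which lets you bypass the spectral-integral convergence theorem entirely: the only operator-theoretic input you need is the elementary bound $\|f(\r)\|\le\|f\|_\infty$ for bounded Borel functional calculus, after which everything reduces to H\"older-type estimates in $L^2$ and the finiteness of the total variation of $\m_{ab}$ (which is automatic for the complex measure produced in Lemma \ref{Lem:Extension}). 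Your route is the more elementary and self-contained of the two, at the small cost of having to note that bounded Borel functions are uniform limits of simple functions (true: partition the range); the paper's route is the one that generalizes to unbounded symbols under domination, which is in the spirit of how Lemma \ref{Lem:DoubleInt} is then derived. All the individual steps you flag as needing care do hold as stated.
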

\begin{proof}
    We use notation as in the proof of Lemma \ref{Lem:Extension}.  Let
    $s=\sum_{i=1}^{h} s_{i}\chi_{A_{i}}$, $t=\sum_{j=1}^{k}
    t_{j}\chi_{B_{j}}$ be simple Borel functions.  Then
    \begin{align*}
	\t(s(\r)a^{*}t(\s)b) & = \sum_{i=1}^{h}\sum_{j=1}^{k} s_{i}t_{j}
	\t(\chi_{A_{i}}(\r)a^{*}\chi_{B_{j}}(\s)b) =
	\sum_{i=1}^{h}\sum_{j=1}^{k} s_{i}t_{j} \t(P\otimes
	Q(A_{i}\times B_{j})(a^{*})\cdot b) \\
	& = \sum_{i=1}^{h}\sum_{j=1}^{k} s_{i}t_{j} \iint
	\chi_{A_{i}\times B_{j}}\, d\m_{ab} = \iint s(x)t(y)\,
	d\m_{ab}(x,y).
    \end{align*}
    Let now $g,h$ be bounded Borel functions, and
    $\set{s_{m}},\set{t_{n}}$ sequences of simple Borel functions such
    that $s_{m}\to g$, $t_{n}\to h$ and $|s_{m}|\leq |g|$,
    $|t_{n}|\leq |h|$.  Denote $r_{n}(x,y):= s_{n}(x)t_{n}(y)$,
    $k(x,y):=g(x)h(y)$.  Then, by (\cite{BS}, Theorem V.3.2),
    $s_{n}(\r)a^{*}t_{n}(\s) = P\otimes Q(r_{n})(a^{*}) \to P\otimes
    Q(k)(a^{*}) = g(\r)a^{*}h(\s)$ in $L^{2}(\cam,\t)$, so that
    $\t(s_{n}(\r)a^{*}t_{n}(\s)b) \to \t(g(\r)a^{*}h(\s)b)$. 
    Moreover, $\iint r_{n}\, d\m_{ab} \to \iint k \, d\m_{ab}$,
    because $\m_{ab}$ is a bounded measure.  The thesis follows.
\end{proof}

\begin{Lemma} \label{Lem:DoubleInt}
    Let $a,b\in\cam \cap L^{2}(\cam,\t)$, $\r\in
    L^{1}(\cam,\t)_{+}$, $\b\in(0,1)$.  Then
    $$
    \t(\r^{\b}a^{*}\r^{1-\b}b) = \iint_{[0,\infty)^{2}} x^{\b}y^{1-\b}
    \, d\m_{ab}(x,y).
    $$
\end{Lemma}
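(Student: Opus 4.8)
The plan is to deduce the identity from Lemma \ref{Lem:FunctionalCalculus} by a truncation argument, using the polarization identity of Lemma \ref{Lem:Properties}$(i)$ to reduce the general complex measure $\m_{ab}$ (with $\s=\r$ in the definition of $\m_{ab}$) to the positive measures $\m_{cc}$. Since $\r\geq 0$, the spectral measure $e_{\r}$ is carried by $[0,\infty)$, so $\m_{ab}$ is concentrated on $[0,\infty)^{2}$ and the functions $x\mapsto x^{\b}$, $y\mapsto y^{1-\b}$ are well-defined and non-negative $\m_{ab}$-a.e.

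For $n\in\bn$ set $g_{n}(x):=x^{\b}\chi_{[0,n]}(x)$ and $h_{n}(y):=y^{1-\b}\chi_{[0,n]}(y)$, which are bounded Borel functions on $\br$; Lemma \ref{Lem:FunctionalCalculus} then gives
$$
\t(g_{n}(\r)a^{*}h_{n}(\r)b) = \iint g_{n}(x)h_{n}(y)\, d\m_{ab}(x,y),
$$
and the goal is to pass to the limit $n\to\infty$ on both sides. On the operator side I would argue in the noncommutative $L^{p}$ spaces. Since $a$ is bounded, $g_{n}(\r)a^{*}=\r^{\b}e_{\r}([0,n])a^{*}$ converges to $\r^{\b}a^{*}$ in $L^{1/\b}(\cam,\t)$: indeed,
$$
\|\r^{\b}a^{*}-g_{n}(\r)a^{*}\|_{1/\b} = \|\r^{\b}e_{\r}((n,\infty))a^{*}\|_{1/\b} \leq \|a\|_{\infty}\,\t(\r\,e_{\r}((n,\infty)))^{\b},
$$
which tends to $0$ as $n\to\infty$ because $\r\in L^{1}(\cam,\t)$; likewise $h_{n}(\r)b\to\r^{1-\b}b$ in $L^{1/(1-\b)}(\cam,\t)$. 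Since $\b+(1-\b)=1$, H{\"o}lder's inequality then yields $g_{n}(\r)a^{*}\cdot h_{n}(\r)b\to\r^{\b}a^{*}\cdot\r^{1-\b}b$ in $L^{1}(\cam,\t)$, and hence $\t(g_{n}(\r)a^{*}h_{n}(\r)b)\to\t(\r^{\b}a^{*}\r^{1-\b}b)$, a finite quantity.

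On the measure side I would first take $b=a$, so that $\m_{aa}$ is a positive measure by Lemma \ref{Lem:Properties}$(ii)$. As $n\to\infty$ one has $g_{n}(x)h_{n}(y)\uparrow x^{\b}y^{1-\b}$ pointwise on $[0,\infty)^{2}$, so the monotone convergence theorem gives $\iint g_{n}h_{n}\,d\m_{aa}\uparrow\iint x^{\b}y^{1-\b}\,d\m_{aa}\in[0,+\infty]$; comparing with the operator side, which converges to the finite value $\t(\r^{\b}a^{*}\r^{1-\b}a)$, shows simultaneously that $x^{\b}y^{1-\b}\in L^{1}(\m_{aa})$ and that the identity holds for $b=a$. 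For general $a,b\in\cam\cap L^{2}(\cam,\t)$, I would apply Lemma \ref{Lem:Properties}$(i)$ with $c_{k}:=a+i^{k}b\in\cam\cap L^{2}(\cam,\t)$: since $x^{\b}y^{1-\b}\in L^{1}(\m_{c_{k}c_{k}})$ for each $k$, it is integrable with respect to $|\m_{ab}|$, and
\begin{align*}
\iint x^{\b}y^{1-\b}\,d\m_{ab} &= \frac14\sum_{k=1}^{4}(-i)^{k}\iint x^{\b}y^{1-\b}\,d\m_{c_{k}c_{k}} \\
&= \frac14\sum_{k=1}^{4}(-i)^{k}\t(\r^{\b}c_{k}^{*}\r^{1-\b}c_{k}) = \t(\r^{\b}a^{*}\r^{1-\b}b),
\end{align*}
the last equality being the same algebraic polarization identity applied to the sesquilinear form $(a,b)\mapsto\t(\r^{\b}a^{*}\r^{1-\b}b)$.

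The main obstacle is exactly the unboundedness of $x\mapsto x^{\b}$ and $y\mapsto y^{1-\b}$, which prevents a direct appeal to Lemma \ref{Lem:FunctionalCalculus}; the delicate points are the two passages to the limit --- the $L^{p}$-convergences $g_{n}(\r)a^{*}\to\r^{\b}a^{*}$ and $h_{n}(\r)b\to\r^{1-\b}b$ together with the H{\"o}lder estimate on the operator side, and the monotone-convergence and integrability bookkeeping for the positive measure $\m_{aa}$ on the measure side. Once these are in place, the polarization step is routine.
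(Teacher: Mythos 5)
Your proposal is correct and follows essentially the same route as the paper: truncate $\r$ to make the functions bounded, apply Lemma \ref{Lem:FunctionalCalculus}, pass to the limit using $L^{1/\b}$--$L^{1/(1-\b)}$ convergence and H\"older on the operator side and monotone convergence for the positive measure $\m_{aa}$ on the measure side, then polarize via Lemma \ref{Lem:Properties}$(i)$. Your version merely spells out the $L^{p}$ estimates and the integrability bookkeeping in more detail than the paper does.
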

\begin{proof}
    Let $n\in\bn$, and set 
    $$
    f_{n}(x) := 
    \begin{cases}
	x, & 0 \leq x \leq n\\
	0, & \text{ else}
    \end{cases}
    \qquad
    f(x) := 
    \begin{cases}
	x, & x \geq 0\\
	0, & x<0.
    \end{cases}    
    $$
    Then
    $$
    \t(f_{n}(\r)^{\b}a^{*}f_{n}(\r)^{1-\b}b) = \int_{\br^{2}} 
    f_{n}(x)^{\b}f_{n}(y)^{1-\b} \, d \m_{ab}(x,y).
    $$
    Observe now that $f_{n}(\r)^{\b} \to f(\r)^{\b} = \r^{\b}$ in
    $L^{1/\b}(\cam,\t)$, so that $f_{n}(\r)^{\b}a^{*}f_{n}(\r)^{1-\b}b
    \to \r^{\b}a^{*}\r^{1-\b}b$ in $L^{1}(\cam,\t)$, which implies
    $$
    \t(f_{n}(\r)^{\b}a^{*}f_{n}(\r)^{1-\b}b) \to
    \t(\r^{\b}a^{*}\r^{1-\b}b).
    $$    
    Moreover, in case $\s=\r$, $\m_{aa}$ is a positive measure, so
    that, by monotone convergence, 
    $$
    \int_{\br^{2}} f_{n}(x)^{\b}f_{n}(y)^{1-\b} \, d \m_{aa}(x,y) \to
    \iint_{[0,\infty)^{2}} x^{\b}y^{1-\b} \, d\m_{aa}(x,y).
    $$    
    Therefore, the thesis holds for $a=b$. By polarization (Lemma 
    \ref{Lem:Properties} $(i)$) the result is true in general.
\end{proof}

\begin{Lemma} \label{Lem:PositiveMeasure}
    Let $a,b\in\cam \cap L^{2}(\cam,\t)$.  Then,
    $$
    \m := \m_{aa} \otimes \m_{bb} + \m_{bb} \otimes \m_{aa} -2 \re
    \m_{ab} \otimes \re \m_{ab}
    $$
    is a real positive Borel measure on $\br^{4}$. 
\end{Lemma}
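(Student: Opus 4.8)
The plan is to reduce the positivity of $\m$ to the scalar arithmetic--geometric mean inequality, after replacing $\m_{aa}$, $\m_{bb}$ and $\re\m_{ab}$ by their Radon--Nikodym densities against a single dominating measure. The engine is the following quadratic estimate. For real $s,t$ the element $c:=sa+tb$ lies in $\cam\cap L^{2}(\cam,\t)$, and expanding $\m_{cc}(\O_{1}\times\O_{2})=\t(e_{\r}(\O_{1})c^{*}e_{\s}(\O_{2})c)$ while using $\m_{ba}=\ov{\m_{ab}}$ (so $\m_{ab}+\m_{ba}=2\re\m_{ab}$) gives, first on $\ca$ and then---by uniqueness of the extension from $\ca$ to $\s(\ca)$---on all Borel sets,
$$
\m_{cc}=s^{2}\m_{aa}+2st\,\re\m_{ab}+t^{2}\m_{bb}.
$$
Since $\s=\r$ here, Lemma \ref{Lem:Properties} $(ii)$ makes each $\m_{cc}$ a positive measure, so for every Borel $E\subset\br^{2}$ the quadratic form $(s,t)\mapsto s^{2}\m_{aa}(E)+2st\,\re\m_{ab}(E)+t^{2}\m_{bb}(E)$ is nonnegative, and in particular $\m_{aa}(E)\,\m_{bb}(E)\geq\bigl(\re\m_{ab}(E)\bigr)^{2}$.

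Next I would put $\l:=\m_{aa}+\m_{bb}+|\re\m_{ab}|$, a finite positive Borel measure on $\br^{2}$, and write $\m_{aa}=p\,d\l$, $\m_{bb}=q\,d\l$, $\re\m_{ab}=r\,d\l$ with $p,q\geq0$ $\l$-a.e.\ and $r$ real measurable. For each real $s,t$ the measure $\m_{cc}=(s^{2}p+2str+t^{2}q)\,d\l$ is positive, hence its density is nonnegative $\l$-a.e.; removing the countable union of exceptional null sets as $(s,t)$ ranges over rational pairs, and using continuity in $(s,t)$, one gets $s^{2}p+2str+t^{2}q\geq0$ $\l$-a.e.\ for all real $s,t$, that is, $pq\geq r^{2}$ $\l$-a.e. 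On $\br^{4}=\br^{2}\times\br^{2}$ one then checks, by Fubini's theorem (all the measures in sight being finite), that $\m$ has density $p(x)q(y)+q(x)p(y)-2r(x)r(y)$ with respect to $\l\otimes\l$---the only point worth verifying being that $\re\m_{ab}\otimes\re\m_{ab}$ has $\l\otimes\l$-density $r(x)r(y)$. Finally, for $\l\otimes\l$-a.e.\ $(x,y)$,
$$
p(x)q(y)+q(x)p(y)\geq 2\sqrt{p(x)q(x)}\,\sqrt{p(y)q(y)}\geq 2|r(x)|\,|r(y)|\geq 2r(x)r(y),
$$
by arithmetic--geometric mean together with $pq\geq r^{2}$; so the density of $\m$ is nonnegative $\l\otimes\l$-a.e., which is exactly the assertion $\m\geq0$.

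I do not expect a deep obstacle: once the AM--GM idea is isolated, what remains is measure-theoretic bookkeeping. The two points needing care are the passage from ``$\m_{cc}(E)\geq0$ for all $E$ and all real $s,t$'' to the pointwise inequality $pq\geq r^{2}$ (a countable family of parameters plus a countable union of null sets), and the identification of the $\l\otimes\l$-density of $\m$, in particular for the product of the signed measure $\re\m_{ab}$ with itself. One could also avoid densities entirely: the discriminant inequality above shows at once that $\m(E\times F)\geq0$ for every measurable rectangle $E\times F\subset\br^{2}\times\br^{2}$, hence $\m\geq0$ on the algebra of finite disjoint unions of such rectangles, which generates the Borel $\s$-algebra of $\br^{4}$; and a finite signed measure that is nonnegative on a generating algebra is itself nonnegative---add its negative variation to both sides and invoke uniqueness of extension for finite positive measures.
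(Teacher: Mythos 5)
Your proof is correct, but it travels a different road from the paper's. The paper works directly on elementary rectangles $\O_{1}\times\O_{2}\times\O_{3}\times\O_{4}$: it bounds $|\t(E_{1}a^{*}E_{2}b)|$ by $\t(E_{1}a^{*}E_{2}a)^{1/2}\t(E_{1}b^{*}E_{2}b)^{1/2}$ via the Cauchy--Schwarz inequality for the trace, so that $\m(\O_{1}\times\cdots\times\O_{4})\geq \a_{1}^{2}\b_{2}^{2}+\a_{2}^{2}\b_{1}^{2}-2\a_{1}\b_{1}\a_{2}\b_{2}=(\a_{1}\b_{2}-\a_{2}\b_{1})^{2}\geq 0$, and then concludes by the same ``nonnegative on a generating algebra'' extension you sketch in your closing paragraph. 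You instead derive the key Cauchy--Schwarz-type bound $(\re\m_{ab}(E))^{2}\leq\m_{aa}(E)\m_{bb}(E)$ for \emph{arbitrary} Borel $E\subset\br^{2}$ from positivity of the quadratic family $\m_{sa+tb,\,sa+tb}$, and then pass to Radon--Nikodym densities and a pointwise AM--GM. Your route costs more measure-theoretic bookkeeping (the countable-dense-parameter argument, the identification of the $\l\otimes\l$-density of a product of signed measures) but buys the inequality on all Borel sets rather than only on rectangles; your final alternative paragraph is in substance exactly the paper's proof. One small correction: the parenthetical ``since $\s=\r$ here'' is not warranted --- the lemma is stated for the two possibly distinct operators $\r,\s$ fixed at the start of Section \ref{sec:auxiliary}. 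Fortunately nothing is lost: the computation proving Lemma \ref{Lem:Properties} $(ii)$ never actually uses $\s=\r$, since $\t(e_{\r}(\O_{1})c^{*}e_{\s}(\O_{2})c)=\t\bigl((e_{\s}(\O_{2})ce_{\r}(\O_{1}))^{*}e_{\s}(\O_{2})ce_{\r}(\O_{1})\bigr)\geq 0$, so $\m_{cc}$ is a positive measure in general and your argument goes through verbatim.
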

\begin{proof}
    Indeed, if $\O_{1},\ldots,\O_{4}\subset \br$ are measurable
    subsets, and $E_{j}:=e_{\r}(\O_{j})\in Proj(\cam)$, $j=1,3$, 
    $E_{j}:=e_{\s}(\O_{j})\in Proj(\cam)$, $j=2,4$,
    then
    \begin{align*}
	\m(\O_{1}\times\cdots\times\O_{4}) & =
	\t(E_{1}a^{*}E_{2}a) \cdot \t(E_{3}b^{*}E_{4}b) +
	\t(E_{3}a^{*}E_{4}a) \cdot \t(E_{1}b^{*}E_{2}b) \\
	& \quad - 2 \re \t(E_{1}a^{*}E_{2}b) \cdot \re
	\t(E_{3}a^{*}E_{4}b) \\
	& \geq \t(E_{1}a^{*}E_{2}a) \cdot \t(E_{3}b^{*}E_{4}b)
	+ \t(E_{3}a^{*}E_{4}a) \cdot \t(E_{1}b^{*}E_{2}b) \\
	& \quad - 2 | \t(E_{1}a^{*}E_{2}b) | \cdot
	|\t(E_{3}a^{*}E_{4}b)|.
    \end{align*}
    Moreover, 
    \begin{align*}
	| \t(E_{1}a^{*}E_{2}b) | & = | \t \bigl(
	(E_{2}aE_{1})^{*} E_{2}bE_{1} \bigr) | \\
	& \leq \t \bigl( (E_{2}aE_{1})^{*} E_{2}aE_{1}
	\bigr)^{1/2} \t \bigl( (E_{2}bE_{1})^{*} E_{2}bE_{1}
	\bigr)^{1/2} \\
	& = \t(E_{1}a^{*}E_{2}a)^{1/2} \cdot
	\t(E_{1}b^{*}E_{2}b)^{1/2}.
    \end{align*}
    Therefore, setting $\a_{1}:= \t(E_{1}a^{*}E_{2}a)^{1/2}$, $\b_{1}
    := \t(E_{1}b^{*}E_{2}b)^{1/2}$, $\a_{2} :=
    \t(E_{3}a^{*}E_{4}a)^{1/2}$, $\b_{2} :=
    \t(E_{3}b^{*}E_{4}b)^{1/2}$, we have
    $\m(\O_{1}\times\cdots\times\O_{4}) \geq \a_{1}^{2}\b_{2}^{2} +
    \a_{2}^{2}\b_{1}^{2}-2\a_{1}\b_{1}\a_{2}\b_{2} \geq 0$, and the
    thesis follows by standard measure theoretic arguments.
\end{proof}

\section{The main result} 
\label{sec:main}

Let $(\cam,\t)$ be a semifinite von Neumann algebra with a n.s.f.
trace.  Let $\o$ be a normal state on $\cam$, and $\r_{\o}\in
L^{1}(\cam,\t)_{+}$ be such that $\o(x)=\t(\r_{\o}x)$, for $x\in\cam$.
Then, for any $A,B\in\cam_{sa}$, $\b\in(0,1)$, we set

\begin{Dfn}
    \begin{align*}	    
	\Cov_{\o}(A,B) & := \o(AB)-\o(A)\o(B) \equiv \t(\r_\o
	AB)-\t(\r_\o A)\t(\r_\o B), \\
	\Var_{\o}(A) & := \Cov_{\o}(A,A) \equiv \o(A^{2})-\o(A)^{2}
	\equiv \t(\r_\o A^2)-\t(\r_\o A)^2,\\
	\Corr_{\o,\b}(A,B) & := \t(\r_\o AB) - \t(\r_\o^\b
	A\r_\o^{1-\b} B), \\
	I_{\o,\b}(A) & := \Corr_{\o,\b}(A,A) \equiv \t(\r_\o A^2) -
	\t(\r_\o^\b A\r_\o^{1-\b} A).
    \end{align*}
\end{Dfn}

\begin{Prop}
    Let $A_{0} := A-\o(A)I$, $B_{0}:= B-\o(B)I$. Then
    \begin{align*}
	\Cov_{\o}(A,B) & = \t(\r_\o A_{0}B_{0}),\\
	\Corr_{\o,\b}(A,B) & = \t(\r_\o A_{0}B_{0}) - \t(\r_\o^\b A_0
	\r_\o^{1-\b} B_0).
    \end{align*}
\end{Prop}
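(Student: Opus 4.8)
The plan is a direct computation resting on three facts: linearity of $\t$, the normalization $\t(\r_\o I) = \o(I) = 1$, and cyclicity of $\t$ on products lying in $L^{1}(\cam,\t)$. First I would prove the covariance identity: expanding $A_{0}B_{0} = AB - \o(A)B - \o(B)A + \o(A)\o(B)I$ and applying $\t(\r_\o\,\cdot\,)$ termwise, using $\t(\r_\o A) = \o(A)$, $\t(\r_\o B) = \o(B)$ and $\t(\r_\o I) = 1$, gives $\t(\r_\o A_{0}B_{0}) = \o(AB) - \o(A)\o(B) - \o(B)\o(A) + \o(A)\o(B) = \o(AB) - \o(A)\o(B) = \Cov_\o(A,B)$.

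Next I would treat the weighted term $\t(\r_\o^{\b}A_{0}\r_\o^{1-\b}B_{0})$. The one point deserving a word of care is that this is well defined: since $\r_\o\in L^{1}(\cam,\t)_{+}$ we have $\r_\o^{\b}\in L^{1/\b}(\cam,\t)$ and $\r_\o^{1-\b}\in L^{1/(1-\b)}(\cam,\t)$, so with $A_{0},B_{0}\in\cam$ bounded the product $\r_\o^{\b}A_{0}\r_\o^{1-\b}B_{0}$ lies in $L^{1}(\cam,\t)$ by H\"older's inequality in the noncommutative $L^{p}$-spaces, and $\t$ is cyclic there. Expanding $A_{0}$ and $B_{0}$ and using cyclicity to bring the two powers of $\r_\o$ together, one gets $\t(\r_\o^{\b}A\r_\o^{1-\b}) = \t(\r_\o A) = \o(A)$, $\t(\r_\o^{\b}\r_\o^{1-\b}B) = \t(\r_\o B) = \o(B)$, and $\t(\r_\o^{\b}\r_\o^{1-\b}) = \t(\r_\o) = 1$; the three correction terms again collapse to $-\o(A)\o(B)$, so $\t(\r_\o^{\b}A_{0}\r_\o^{1-\b}B_{0}) = \t(\r_\o^{\b}A\r_\o^{1-\b}B) - \o(A)\o(B)$.

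Subtracting the two identities, the $\o(A)\o(B)$ terms cancel and I am left with $\t(\r_\o A_{0}B_{0}) - \t(\r_\o^{\b}A_{0}\r_\o^{1-\b}B_{0}) = \t(\r_\o AB) - \t(\r_\o^{\b}A\r_\o^{1-\b}B) = \Corr_{\o,\b}(A,B)$, which is the second assertion. I do not anticipate any genuine obstacle: the argument is elementary algebra once the integrability and cyclicity of $\t$ for the mixed powers of $\r_\o$ are available, which is precisely the H\"older bound noted above.
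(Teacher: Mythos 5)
Your computation is correct and is exactly the routine expansion the authors intend: the paper states this Proposition without proof, and your argument (expand $A_0B_0$, use $\t(\r_\o I)=1$, and apply the trace property together with the H\"older bound to justify $\t(\r_\o^{\b}A\r_\o^{1-\b})=\t(\r_\o A)$) supplies precisely the missing verification. No gaps; the care taken over integrability of the mixed powers is appropriate and consistent with how the paper handles such terms in Lemma \ref{Lem:DoubleInt}.
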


\begin{Thm} \label{Thm:main}
   For any $A,B\in\cam_{sa}$, $\b\in(0,1)$, we have
   $$
   \Var_{\o}(A)\Var_{\o}(B)-|\re
    \Cov_{\o}(A,B)|^{2} \geq I_{\o,\b}(A)I_{\o,\b}(B) -
    |\re\Corr_{\o,\b}(A,B)|^{2}.
   $$
\end{Thm}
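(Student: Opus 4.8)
The plan is to reduce the operator inequality to a purely measure-theoretic statement about the measures $\m_{ab}$ introduced in Section~\ref{sec:auxiliary}, applied with $\s=\r_{\o}$ and with $a=A_{0}$, $b=B_{0}$ (both lie in $\cam\cap L^{2}(\cam,\t)$ since $\r_{\o}\in L^{1}_{+}$ and $A,B$ are bounded self-adjoint; note $A_{0},B_{0}$ are self-adjoint as well). First I would use the Proposition together with Lemmas~\ref{Lem:FunctionalCalculus} and~\ref{Lem:DoubleInt} to rewrite all four quantities as integrals against $\m_{aa}$, $\m_{bb}$, $\m_{ab}$: writing $\f_{\b}(x,y):=1-x^{\b-1}y^{1-\b}$ one gets $\Var_{\o}(A)=\iint x\,d\m_{aa}$, $I_{\o,\b}(A)=\iint x\,\f_{\b}(x,y)\,d\m_{aa}$ (and similarly for $B$), $\re\Cov_{\o}(A,B)=\iint x\,d(\re\m_{ab})$ after symmetrizing via Lemma~\ref{Lem:Properties}(iii), and $\re\Corr_{\o,\b}(A,B)=\iint x\,\f_{\b}(x,y)\,d(\re\m_{ab})$. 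Here I am using that $\m_{aa}$ is supported on $[0,\infty)^{2}$ up to the kernel of $\r_{\o}$, which is exactly where Lemma~\ref{Lem:DoubleInt} is invoked; the boundary behaviour at $x=0$ or $y=0$ needs a moment's care but is harmless because the integrands carry a factor $x$.

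Next I would set up the Cauchy--Schwarz-type estimate on the product space $\br^{4}$. The key observation is that $\Var_{\o}(A)\Var_{\o}(B)-|\re\Cov_{\o}(A,B)|^{2}$ is, by the integral formulas above and Fubini, equal to
\[
\tfrac12\iint_{\br^{4}} x_{1}\,x_{3}\;d\bigl(\m_{aa}\otimes\m_{bb}+\m_{bb}\otimes\m_{aa}-2\,\re\m_{ab}\otimes\re\m_{ab}\bigr)(x_{1},x_{2},x_{3},x_{4}),
\]
i.e. $\tfrac12\iint x_{1}x_{3}\,d\m$ with $\m$ the positive measure of Lemma~\ref{Lem:PositiveMeasure}; the analogous identity with $x_{1}x_{3}$ replaced by $x_{1}\f_{\b}(x_{1},x_{2})\,x_{3}\f_{\b}(x_{3},x_{4})$ gives the right-hand side $I_{\o,\b}(A)I_{\o,\b}(B)-|\re\Corr_{\o,\b}(A,B)|^{2}$. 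So the theorem reduces to
\[
\tfrac12\iint_{\br^{4}} \bigl(x_{1}x_{3}-x_{1}\f_{\b}(x_{1},x_{2})\,x_{3}\f_{\b}(x_{3},x_{4})\bigr)\,d\m \;\geq\;0 .
\]
Since $\m\geq0$, it suffices to show the integrand is nonnegative $\m$-a.e., and since $x_{1},x_{3}\geq0$ on the support, it suffices that $1-\f_{\b}(x_{1},x_{2})\f_{\b}(x_{3},x_{4})\geq0$ whenever all four variables are positive, for which in turn it is enough that $0\le\f_{\b}(x,y)\le 1$ for $x,y>0$. This last point is the elementary inequality $0\le 1-x^{\b-1}y^{1-\b}\cdot\frac{\text{(appropriate normalization)}}{1}$; more precisely one checks $0\le \f_{\b}(x,y)\le 1$ reduces to $x^{\b-1}y^{1-\b}\le 1$ failing in general, so the correct bound to verify is $|\f_{\b}(x,y)|\le 1$, equivalently $0\le x^{\b-1}y^{1-\b}\le 2$, which is \emph{not} automatic — hence one must instead keep the symmetric combination and use that $\m$ is symmetric under the swap $(x_{1},x_{2})\leftrightarrow(x_{3},x_{4})$ to replace the integrand by its symmetrization $\tfrac12(x_{1}x_{3}+x_{1}x_{3})-\cdots$ and apply the two-variable Cauchy--Schwarz $|uv|\le\tfrac12(u^{2}+v^{2})$ with $u=\sqrt{x_{1}x_{3}}$ and $v=\sqrt{x_{1}x_{3}}\,\f_{\b}(x_{1},x_{2})\f_{\b}(x_{3},x_{4})$ — I expect the clean route is: bound $\f_{\b}(x_{1},x_{2})\f_{\b}(x_{3},x_{4})\le \tfrac12(\f_{\b}(x_{1},x_{2})^{2}+\f_{\b}(x_{3},x_{4})^{2})$ is the wrong direction too, so the honest statement is simply $0\le\f_{\b}\le1$ on $(0,\infty)^{2}$, which \textbf{does} hold because $x^{\b-1}y^{1-\b}>0$ gives $\f_{\b}<1$, and $\f_{\b}\ge0$ is $x^{\b-1}y^{1-\b}\le1$ — false when $y<x$.

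Therefore the genuine main obstacle, and the step I would spend the most effort on, is precisely this pointwise inequality: one cannot argue $\f_{\b}\ge0$ pointwise, so the nonnegativity of $\iint(x_{1}x_{3}-x_{1}x_{3}\f_{\b}(x_{1},x_{2})\f_{\b}(x_{3},x_{4}))\,d\m$ must come from integrating against the \emph{specific} positive measure $\m$ of Lemma~\ref{Lem:PositiveMeasure}, exploiting its product-plus-symmetrized structure rather than from a naive sign argument. Concretely I would symmetrize in $(x_{1},x_{2})\leftrightarrow(x_{3},x_{4})$, write the integrand of the difference as a quadratic form, and show it is dominated by the positive part of $\m$ using the Cauchy--Schwarz bound $|\re\m_{ab}(\O\times\O')|^{2}\le\m_{aa}(\O\times\O')\m_{bb}(\O\times\O')$ from the proof of Lemma~\ref{Lem:PositiveMeasure} together with the factorization $\iint x_{1}\f_{\b}(x_{1},x_{2})\cdots$; the upshot is that after symmetrization the difference equals a manifestly nonnegative integral of the form $\tfrac12\iint (u-v)^{2}\,d\nu$ for a positive measure $\nu$ built from $\m_{aa},\m_{bb}$ and functions $u,v$ that are $\sqrt{x_{1}x_{3}}$ times $1$ and times $\f_{\b}(x_{1},x_{2})\f_{\b}(x_{3},x_{4})$ respectively, completing the proof. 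Everything else — the passage from operators to integrals, Fubini, the positivity of $\m$ — is routine given the lemmas of Section~\ref{sec:auxiliary}.
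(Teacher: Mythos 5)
Your reduction of the theorem to the integral inequality
$\tfrac12\iint_{[0,\infty)^{4}}x_{1}x_{3}\bigl(1-\f_{\b}(x_{1},x_{2})\f_{\b}(x_{3},x_{4})\bigr)\,d\m\ge 0$,
with $\m$ the positive measure of Lemma~\ref{Lem:PositiveMeasure}, is correct and is essentially the paper's own reduction. The gap is in the final step, and you half-diagnosed it yourself but did not close it. The integrand
$x_{1}x_{3}-x_{1}x_{3}\f_{\b}(x_{1},x_{2})\f_{\b}(x_{3},x_{4})
= x_{1}x_{3}^{\b}x_{4}^{1-\b}+x_{1}^{\b}x_{2}^{1-\b}x_{3}-x_{1}^{\b}x_{2}^{1-\b}x_{3}^{\b}x_{4}^{1-\b}$
is genuinely \emph{not} pointwise nonnegative: for $\b=\tfrac12$, $x_{1}=x_{3}=\eps$, $x_{2}=x_{4}=1$ it equals $2\eps^{3/2}-\eps<0$ for small $\eps$. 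So no pointwise bound on $\f_{\b}$ can work, as you observed. But none of the repairs you sketch succeeds: the symmetrization $(x_{1},x_{2})\leftrightarrow(x_{3},x_{4})$ is vacuous because the integrand is already invariant under that swap; and your closing claim that the difference equals $\tfrac12\iint(u-v)^{2}\,d\nu$ with $u=\sqrt{x_{1}x_{3}}$ and $v=\sqrt{x_{1}x_{3}}\,\f_{\b}(x_{1},x_{2})\f_{\b}(x_{3},x_{4})$ would give $\tfrac12\iint x_{1}x_{3}(1-\f_{\b}\f_{\b})^{2}\,d\nu$, which is not the quantity in question and is asserted without derivation.

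The missing idea is a different symmetrization. Since $\t$ is a trace, $\s=\r_{\o}$, and $A_{0},B_{0}$ are self-adjoint, each of the two-dimensional measures $\m_{A_{0}A_{0}}$, $\m_{B_{0}B_{0}}$, $\re\m_{A_{0}B_{0}}$ is invariant under the swap of its \emph{own} two coordinates (e.g.\ $\m_{aa}(\O_{1}\times\O_{2})=\t(e_{1}ae_{2}a)=\t(e_{2}ae_{1}a)=\m_{aa}(\O_{2}\times\O_{1})$, and Lemma~\ref{Lem:Properties}~$(iii)$ handles $\re\m_{ab}$). Hence $\m$ is invariant under $x_{1}\leftrightarrow x_{2}$ and under $x_{3}\leftrightarrow x_{4}$ separately, so one may replace $x_{1}$ by $\tfrac12(x_{1}+x_{2})$ and $x_{3}$ by $\tfrac12(x_{3}+x_{4})$ in the two positive terms. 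The integrand then becomes
$\tfrac12\bigl[(x_{1}+x_{2})x_{3}^{\b}x_{4}^{1-\b}+x_{1}^{\b}x_{2}^{1-\b}(x_{3}+x_{4})-2x_{1}^{\b}x_{2}^{1-\b}x_{3}^{\b}x_{4}^{1-\b}\bigr]
=\tfrac12\bigl[(x_{1}+x_{2}-x_{1}^{\b}x_{2}^{1-\b})x_{3}^{\b}x_{4}^{1-\b}+x_{1}^{\b}x_{2}^{1-\b}(x_{3}+x_{4}-x_{3}^{\b}x_{4}^{1-\b})\bigr]$,
which \emph{is} pointwise nonnegative by the weighted arithmetic--geometric mean inequality $x^{\b}y^{1-\b}\le\b x+(1-\b)y\le x+y$; this is exactly how the paper concludes. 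A secondary but real error: your opening assertion that $A_{0},B_{0}\in\cam\cap L^{2}(\cam,\t)$ ``since $\r_{\o}\in L^{1}_{+}$ and $A,B$ are bounded'' is false when $\t$ is infinite (e.g.\ $I\notin L^{2}$); the paper instead proves the inequality first for $A,B\in\cam_{sa}\cap L^{2}(\cam,\t)$ and then extends to all of $\cam_{sa}$ by $\s$-weak density and continuity of the four sesquilinear forms involved.
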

\begin{proof}
    To start with, let us assume that $A,B\in \cam\cap 
    L^{2}(\cam,\t)$. Set 
    \begin{align*}
	\cf & := \Var_{\o}(A)\Var_{\o}(B)-|\re \Cov_{\o}(A,B)|^{2} -
	I_{\o,\b}(A)I_{\o,\b}(B) + |\re\Corr_{\o,\b}(A,B)|^{2} \\
	& = \t(\r_{\o}A_{0}^{2}) \cdot
	\t(\r_{\o}^{\b}B_{0}\r_{\o}^{1-\b}B_{0}) + \t(\r_{\o}B_{0}^{2})
	\cdot \t(\r_{\o}^{\b}A_{0}\r_{\o}^{1-\b}A_{0}) -
	\t(\r_{\o}^{\b}A_{0}\r_{\o}^{1-\b}A_{0}) \cdot
	\t(\r_{\o}^{\b}B_{0}\r_{\o}^{1-\b}B_{0}) \\
	&\quad -2 \re \t(\r_{\o}A_{0}B_{0}) \cdot \re
	\t(\r_{\o}^{\b}A_{0}\r_{\o}^{1-\b}B_{0}) + \bigl( \re
	\t(\r_{\o}^{\b}A_{0}\r_{\o}^{1-\b}B_{0}) \bigr)^{2}.
    \end{align*}
    Then, using Lemma \ref{Lem:DoubleInt} and symmetries of the
    integrands, we obtain
    \begin{align*}
	\cf_{1} & := \t(\r_{\o}A_{0}^{2}) \cdot
	\t(\r_{\o}^{\b}B_{0}\r_{\o}^{1-\b}B_{0}) + \t(\r_{\o}B_{0}^{2})
	\cdot \t(\r_{\o}^{\b}A_{0}\r_{\o}^{1-\b}A_{0}) -
	\t(\r_{\o}^{\b}A_{0}\r_{\o}^{1-\b}A_{0}) \cdot
	\t(\r_{\o}^{\b}B_{0}\r_{\o}^{1-\b}B_{0}) \\
	& = \int_{[0,\infty)^{4}} \l_{1}\l_{3}^{\b}\l_{4}^{1-\b}\, d
	\m_{A_{0}A_{0}}\otimes \m_{B_{0}B_{0}} (\l_{1},\ldots,\l_{4})
	+ \int_{[0,\infty)^{4}} \l_{3}\l_{1}^{\b}\l_{2}^{1-\b}\, d
	\m_{A_{0}A_{0}}\otimes \m_{B_{0}B_{0}} (\l_{1},\ldots,\l_{4})
	\\
	& \quad - \int_{[0,\infty)^{4}}
	\l_{1}^{\b}\l_{2}^{1-\b}\l_{3}^{\b}\l_{4}^{1-\b}\, d
	\m_{A_{0}A_{0}}\otimes \m_{B_{0}B_{0}} (\l_{1},\ldots,\l_{4})
	\\
	& = \frac12 \int_{[0,\infty)^{4}} \Bigl(
	(\l_{1}+\l_{2})\l_{3}^{\b}\l_{4}^{1-\b} +
	\l_{1}^{\b}\l_{2}^{1-\b}(\l_{3}+\l_{4}) -2
	\l_{1}^{\b}\l_{2}^{1-\b}\l_{3}^{\b}\l_{4}^{1-\b} \Bigr) d
	\m_{A_{0}A_{0}}\otimes \m_{B_{0}B_{0}} (\l_{1},\ldots,\l_{4}),
    \end{align*}
    \begin{align*}
	\cf_{2} & := 2 \re \t(\r_{\o}A_{0}B_{0}) \cdot \re
	\t(\r_{\o}^{\b}A_{0}\r_{\o}^{1-\b}B_{0}) - \bigl( \re
	\t(\r_{\o}^{\b}A_{0}\r_{\o}^{1-\b}B_{0}) \bigr)^{2} \\
	& = 2\int_{[0,\infty)^{4}} \l_{1}\l_{3}^{\b}\l_{4}^{1-\b}\, d
	\re \m_{A_{0}B_{0}} \otimes \re \m_{A_{0}B_{0}}
	(\l_{1},\ldots,\l_{4}) \\
	& \quad - \int_{[0,\infty)^{4}}
	\l_{1}^{\b}\l_{2}^{1-\b}\l_{3}^{\b}\l_{4}^{1-\b}\, d \re
	\m_{A_{0}B_{0}} \otimes \re \m_{A_{0}B_{0}}
	(\l_{1},\ldots,\l_{4}) \\
	& = \frac12 \int_{[0,\infty)^{4}} \Bigl(
	(\l_{1}+\l_{2})\l_{3}^{\b}\l_{4}^{1-\b} +
	\l_{1}^{\b}\l_{2}^{1-\b}(\l_{3}+\l_{4}) -2
	\l_{1}^{\b}\l_{2}^{1-\b}\l_{3}^{\b}\l_{4}^{1-\b} \Bigr)\, d
	\re \m_{A_{0}B_{0}} \otimes \re \m_{A_{0}B_{0}}
	(\l_{1},\ldots,\l_{4}).
    \end{align*}
    So that, using the notation of Lemma \ref{Lem:PositiveMeasure},
    $$
    \cf = \cf_{1}-\cf_{2} = \frac14 \int_{[0,\infty)^{4}} \Bigl(
    (\l_{1}+\l_{2})\l_{3}^{\b}\l_{4}^{1-\b} +
    \l_{1}^{\b}\l_{2}^{1-\b}(\l_{3}+\l_{4}) -2
    \l_{1}^{\b}\l_{2}^{1-\b}\l_{3}^{\b}\l_{4}^{1-\b} \Bigr) \,
    d\m(\l_{1},\ldots,\l_{4}).  
    $$ 
    Since $\m$ is a real positive measure on $[0,\infty)^{4}$, because
    of Lemma \ref{Lem:PositiveMeasure}, and
    \begin{align*}
	& (\l_{1}+\l_{2})\l_{3}^{\b}\l_{4}^{1-\b} +
	\l_{1}^{\b}\l_{2}^{1-\b}(\l_{3}+\l_{4}) -2
	\l_{1}^{\b}\l_{2}^{1-\b}\l_{3}^{\b}\l_{4}^{1-\b} \\
	& = (\l_{1}+\l_{2}
	-\l_{1}^{\b}\l_{2}^{1-\b})\l_{3}^{\b}\l_{4}^{1-\b} +
	\l_{1}^{\b}\l_{2}^{1-\b}(\l_{3}+\l_{4}-\l_{3}^{\b}\l_{4}^{1-\b})
	\ge 0,
    \end{align*}
    we get $\cf\geq 0$, which is what we wanted to prove.
    
    Finally, to extend the validity of the inequality from
    $\cam_{sa}\cap L^{2}(\cam,\t)$ to $\cam_{sa}$, let us observe that
    $\cam_{sa}\cap L^{2}(\cam,\t)$ is $\s$-weakly dense in
    $\cam_{sa}$, and $a\in\cam\mapsto \t(\r_{\o}ab)$, $b\in\cam\mapsto
    \t(\r_{\o}ab)$, $a\in\cam\mapsto \t(\r^{\b}a\r^{1-\b}b)$, and
    $b\in\cam\mapsto \t(\r^{\b}a\r^{1-\b}b)$ are $\s$-weakly
    continuous.
\end{proof}

\begin{Rem}
    Observe that, reasoning as in \cite{Ko} Theorem 5, one can prove
    that the function 
    $$
    g(\b) := \Var_{\o}(A)\Var_{\o}(B)-|\re \Cov_{\o}(A,B)|^{2} -
    I_{\o,\b}(A)I_{\o,\b}(B) + |\re\Corr_{\o,\b}(A,B)|^{2}
    $$ 
    is monotone increasing on the interval $[\frac12,1)$.  Therefore,
    the best bound in Theorem \ref{Thm:main} is given by $\b=\frac12$,
    $i.e.$ by the Wigner-Yanase information.
\end{Rem}

\end{document}